\newcommand{\blarghQED}{\hfill$\square$}
\definecolor{linkc}{rgb}{0.1,0.1,.8}
\definecolor{darkgreen}{rgb}{0,0.5,0}
\definecolor{midblue}{rgb}{0,0,0.7}
\newcommand{\lstFigSize}{\scriptsize}
\newcommand{\lstDisplaySize}{\footnotesize}
\lstdefinestyle{lstFigStyle}{
%   xleftmargin=2.5em,
   mathescape=true,
   escapechar=@,
   basicstyle=\lstFigSize\ttfamily,
   keywordstyle=\lstFigSize\color{midblue}\ttfamily\bfseries,
   commentstyle=\lstFigSize\slshape\color{darkgreen},
   numbers=none,
%   numbers=left,
%   numberstyle=\lstFigSize\rmfamily,
}  
\lstdefinestyle{lstDisplayStyle}{
   xleftmargin=\parindent,
   mathescape=true,
   escapechar=@,
   basicstyle=\lstDisplaySize\ttfamily,
   keywordstyle=\lstDisplaySize\color{midblue}\ttfamily\bfseries,
   commentstyle=\lstDisplaySize\slshape\color{darkgreen},
   numbers=none
}  
\newcommand{\code}[1]{\texttt{#1}}
\newcommand{\aprec}[1]{\ensuremath{\mathrm{prec}_{#1}\,}}
\newcommand{\gprec}{\ensuremath{\mathrm{prec}\,}}
\newcommand{\adeg}[1]{\ensuremath{\mathrm{deg}_{#1}\,}}
\newcommand{\gdeg}{\ensuremath{\mathrm{deg}\,}}
\newcommand{\aclip}[1]{\ensuremath{\mathrm{clip}_{#1}\,}}
\newcommand{\notationsep}{\vspace{1ex}}
\newcommand{\arev}[1]{\ensuremath{\mathrm{rev}_{#1}\,}}
\newcommand{\mtwo}[1]{\ensuremath{\left [ \begin{array}{cc}#1\end{array} \right ]}}
\begin{document}
\title{Computing Clipped Products}
\author{
  Arthur C. Norman$^1$ 
  \and 
  Stephen M. Watt$^2$
}
\institute{
  Trinity College,
  Cambridge CB2 1TQ, UK\\
  \email{acn1@cam.ed.uk}\\
  ~
  \and
  Cheriton School of Computer Science,
  University of Waterloo, N2L 3G1 Canada\\
  \url{https://cs.uwaterloo.ca/~smwatt} \\
  \email{smwatt@uwaterloo.ca}
}

\maketitle
\begin{abstract}
Sometimes only some digits of a numerical product or some terms of a polynomial or series product are required.   
Frequently these constitute the most significant or least significant part of the value, for example when computing initial values or refinement steps in iterative approximation schemes. 
Other situations require the middle portion.
In this paper we provide algorithms for the general problem of computing a given span of coefficients within a product, that is the terms within a range of degrees for univariate polynomials or range digits of an integer. This generalizes the ``middle product'' concept of Hanrot, Quercia and Zimmerman. We are primarily interested in problems of modest size where constant speed up factors can improve overall system performance, and therefore focus the discussion on classical and Karatsuba multiplication and how methods may be combined.
\end{abstract}
\keywords{integer product, polynomial product, convolution subrange,\newline
product approximation}
%%
%% Publisher Info %%%%%%%%%%%%%%%%%%%%%%%%%%%%%%%%%%%%%%%%%%%%%%%%%%%%%%%%%%%%%%%
%%
\thispagestyle{plain}\pagestyle{plain}
%23456789 123456789 123456789 123456789 123456789 123456789 123456789 123456789!

\section{Introduction}
\label{sec:introduction}
The classical discussion of integer multiplication concentrates on
forming the product of a pair of $N$-digit numbers to form a $2N$ digit
result. The techniques used apply also to dense univariate polynomials or
to truncated power series. We start by pointing out what a limited
viewpoint this represents.

There are three use-cases where at first it seems that 
balanced multiplication (\textit{i.e.} ones where the two inputs are the same size) will be central in large computations. The first
is in fixed but extended precision floating point arithmetic where $N$-digit
floating values are handled as $N$-digit integers alongside an exponent
value. But here it is not really appropriate to use full general multiplication because only the top $N$ of the
product digits are retained, so computing the low $N$ digits will represent
a waste. The same issue arises (but minus the complication of carries) in
work with truncated power series where it is the low half of the full polynomial product that has to be kept.
The second is in cryptography where there is some large modulus $p$ and all
values worked with are in the range 0 to $p-1$. But hereafter any
multiplication that generates a double-length product there needs to be a
reduction mod $p$. This has the feel of (even if it is not exactly the same
as) just wanting the low $N$ digits of the full $2N$ digit product.
Often the cost of the remaindering operation will be at least as important as the multiplication, and we will refer later to our thoughts on that.
The third may be in computation of elementary constants such as $e$ and $\pi$
to extreme precision where the precision of all the computational steps is very rigidly choreographed, but again at least at the
end it is liable to have much in common with the extended precision floating point case.

Other real-world cases are more liable to be forming $M\times N$ products where $M$
and $N$ may happen to be close in magnitude but can also be wildly
different, and where the values of $M$ and $N$ are liable to differ for
each multiplication encountered. Such a situation arises in almost any computation involving polynomials with exact integer coefficients --- GCD
calculations, Gr\"{o}bner bases, power series with rational 
coefficients, quantifier elimination. Many of these tasks have both broad
practical applicability and can be seriously computationally demanding so that absolute performance matters.
We expand on these arguments in~\cite{2024-norman-watt}.

% As with the $N\times N$ case
% not all digits (or in the polynomial or power series case terms) of the
% result are always required. 

Part of our motivation for this work was that, when considering division (and hence remainder) using a Newton-style iteration to compute a scaled inverse of the divisor, we found that in a fairly natural way we were wanting to multiply an $N$ digit value by one of length $2N$ but then only use the middle $N$ digits of the full $3N$-length product. 
Failing to exploit both the imbalance and the clipping there would hurt the overall performance of division.  

This specific problem of computing the middle third of a $2N \times N$ product has been considered previously~\cite{HanQuerZim}, where it has been used to speed up the division and square root of power series.
In this situation, the middle third is exactly the part of the product where the convolution for each term uses all terms of the shorter factor.   This allows a Karatsuba-like recursive scheme to compute the middle third of the product, or to use FFTs of size $2N$ instead of $4N$, saving a factor of 2 multiplying large values. 
This reduces the Newton iteration time for division from $3M(N)+O(N)$ to $2M(N)+O(N)$,
where $M(N)$ is the time for multiplication of size $N$.

Another important case is that of computing the initial terms of a product. Mulders' algorithm~\cite{mulders,HanZim} produces the first $N$ terms of a power series product, called a ``short product''. 
This is done by selecting a cut off point $k$ and computing one $k\times k$ product and two $(N-k)\times(N-k)$ short products recursively.

These ``middle third of a $2N\times N$ product'' and ``prefix'' situations are important cases but proper code needs to allow for additional generality.
So we view the proper general problem to address is
the formation of an $M\times N$ product where only digits (or terms) from positions $a$ to $b$
are required.
We call this a \emph{clipped product}.

A clipped product can obviously be implemented by forming the full $M\times N$ product, or
indeed by padding the smaller of the two inputs to get the $N\times N$ case (and
sometimes going further and padding the size there to be a power of 2!)
and then at the end just ignoring unwanted parts of the result. From
an asymptotic big-$O$ perspective that may suffice, but we are interested
in implementations where even modest improvements in absolute performance
on ``medium sized'' problems matters. So can we do better? Achieving
certified lower bounds on cost becomes infeasible both with the large
number of input size and size-related parameters and with the fact that
for medium sized problems overheads matter, and they can be platform- and
implementation-sensitive. But despite that we will show we can suggest
better approaches than the na\"{\i}ve one.

The main results of this paper are algorithms for clipped multiplication of integers and polynomials over a general ring, adapting both the classical $O(N^2)$ and Karatsuba $O(N^{log_2 3})$ methods. An analysis is given to show how much look-back is required in order to have the correct carry in for the lowest digit in the classical integer case.

The remainder of the article is organized as follows:
Section~\ref{sec:preliminaries} provides some notation and gives a definition of clipping.
Section~\ref{sec:straightforward} shows some straightforward methods to compute clipped products.   The main idea is that one can compute a lower part by multiplying only the lower terms and clipping afterwards.  For polynomials, one can use reversed polynomials to get a higher part. 
Section~\ref{sec:clippedpoly} shows how to adapt polynomial multiplication algorithms so that only the required  part is calculated.  
Section~\ref{sec:clippedint} shows how to adapt integer multiplication so that only the required part is calculated.  The difference from polynomial multiplication is in dealing with carries.
Section~\ref{sec:combining} discusses some issues that arise in combining sub-multiplications.
Finally, Section~\ref{sec:thoughts} presents some further thoughts and conclusions.  The algorithms are presented in high-level Maple code, though in practice a lower-level language like C or Rust may be used.

\section{Preliminaries}
\label{sec:preliminaries}
For many algorithms, integers and univariate polynomials behave similarly, however especially when losing low order digits is called for integer calculations are complicated by the need to allow for carries up from those discardable parts of the results.  
We use the notation of~\cite{watt-issac-2023,watt-casc-2023} to allow generic discussion of concepts relating to both:

\notationsep
\begin{tabular}{p{8em}l}
$[a..b], [a..b)$, \textit{etc} & integer intervals, \textit{i.e.} real intervals intersected with $\mathbb Z$ \\
$\aprec B n$  &  number of base-$B$ digits of an integer $n$, $\lfloor \log_B |n|\rfloor + 1$\\
$\aprec x p$  & number of coefficients of a polynomial $p$, $\adeg x p + 1$\\
$M(n,m)$ & the time to multiply two values with $\gprec$ $n$ and $m$. 
\end{tabular}
\notationsep

\noindent
The integer interval notation, ``$[a..b]$'' \textit{etc}, is used by Knuth, \textit{e.g.}~\cite{knuth-vol4b}.
In discussions that apply to both integers and polynomials we may use $t$ as a generic base, which may stand for a integer radix or a polynomial variable, in which case we may write, \textit{e.g.}, $\aprec t u$.
We write the coefficients of $u$ as $u_{t,i}$, where
\[
   u = \sum_{i\in[0..\aprec t u)} u_{t,i} t^i.
\]
In the integer case, it is required to have $0 \le u_{B,i} < B$ for uniqueness.  If $i < 0$ or $i \ge \aprec t u$, then $u_{t, i} = 0$.
When the base for integers or variable for polynomials is understood, then we may simply write $\gprec u$ and $u_i$.   
To refer to part of an integer or polynomial, we use the notation below and call this the \emph{clipped} value.

\notationsep
\begin{tabular}{p{8em}l}
$\aclip{t, I} u $ & $\sum_{i \in I} u_{t,i} t^i$, $I$ an integer interval,
\end{tabular}
\notationsep

\noindent
As before, when the base for integers or the variable for polynomials is understood, we may simply write $\aclip I u$.

\paragraph{Examples}~\\\noindent
Letting $p = a_5 x^5 + a_4 x^4 + a_3 x^3 + a_2 x^2 + a_1 x + a_0$,
\begin{align*}
    \aprec x p &= 6 \\
    \aclip{x,[2..4)} p &= a_3 x^3 + a_2 x^2\\
\end{align*}
Letting $n=504132231405$,
\begin{align*}
    \aprec {100} n &= 6 \\
    \aclip{100,[2..4]} n &= 4132230000.
\end{align*}

\section{Straightforward Methods}
\label{sec:straightforward}
We begin with the obvious methods to compute the clipped value of a product.
\subsection*{Direct Clipped Product}
The simplest approach is simply to compute the whole product and extract the desired part. The clipped product $\aclip{t, [a..b]}(f \times g)$ may be computed as shown in Figure~\ref{fig:straightforward_clipped}.
Here, \verb+clip(p, t, a, b)+ computes $\aclip{t,[a..b]} p$.
The computation time is $M(\gprec f, \gprec g) + O(|r|)$.
\paragraph{Example}
With
\begin{align*}
f &= 4x^3+83x^2+10x-62, &
g &= 82x^5-80x^4+44x^3-71x^2+17x+75
\end{align*}
we have 
\[
f \times g =
328x^8+6486x^7-5644x^6-2516x^5-425x^4-1727x^3+10797x^2-304x-4650
\]
so
\[
\aclip{x,[2..3]}(f\times g) =
-1727x^3+10797x^2.
\]

\subsection*{Bottom Clipped Product}
It is possible to improve on the direct clipped product when only the lower index coefficients are required. In this case integers and polynomials can be treated identically since carries out from the calculated range have no impact on the final result.   If the clipped product range is $[0..b]$ then no coefficient of index greater than $b$ of the operands can affect the clipped product.
Therefore those coefficients can be discarded prior to computing the product.  Then, as shown in Figure~\ref{fig:straightforward_clipped}, one proceeds as before.
The computation time is \[
 M_{\mathrm{bot}} (\gprec f, \gprec g, b) = M(\min(\gprec f, b) , \min(\gprec g, b)) + O(b).
 \]
 \paragraph{Example}
 
 With $f$ and $g$ as before, we may compute $\aclip{x,[0..3]} (f\times g)$
 by multiplying
 $\aclip{x,[0..3]} f=4x^3 + 83x^2 + 10x - 62$ and
 $\aclip{x,[0..3]} g=44x^3 - 71x^2 + 17x + 75$ to get
 $
 176x^6 + 3368x^5 - 5385x^4 - 1727x^3 + 10797x^2 - 304x - 4650
 $ which is then clipped to retain the terms of degrees in $[0..3]$, namely
 $- 1727x^3 + 10797x^2 - 304x - 4650$.

\subsection*{General Clipped Product from Bottom}
If it is desired to compute a clipped product with a range with lower bound other than zero, it is possible to do so with one call to \code{BottomClippedProduct},
as shown in Figure~\ref{fig:straightforward_clipped}.
The computation time is 
\[
M_{\mathrm{gbot}}(\gprec f, \gprec g, a, b) = 
M(\min(\gprec f, b), \min(\gprec g, b)) + O(b-a).
\]

\paragraph{Example} To compute $\aclip{x,[2..3]}(f\times g)$ we obtain $\aclip{x,[0..3]}(f\times g)$ using the bottom clipped product, and clip it as $\aclip{x,[2..3]}$.

\begin{figure}[t]
\begin{lstlisting}[language=maple,style=lstFigStyle]
DirectClippedProduct := proc(f, g, t, a, b) 
    local p, r;
    p := f * g;
    r := clip(p, t, a, b);
    return r
end;

BottomClippedProduct := proc(f, g, t, b)
    local clipf, clipg, p, r;
    clipf := clip(f, t, 0, min(b, prec(t,f)-1));
    clipg := clip(g, t, 0, min(b, prec(t,g)-1));
    p := clipf * clipg;
    r := clip(p, t, 0, b);
    return r
end;

ClippedProductFromBottom := proc(f, g, t, a, b)
    local rb, r;
    rb := BottomClippedProduct(f, g, t, b);
    r  := clip(rb,  t, a, b);
    return r
end;
\end{lstlisting}
\caption{Straightforward clipped products.}
\label{fig:straightforward_clipped}
\end{figure}
\subsection*{Top Clipped Product From Reverse}
As the upper limit of the clipping range approaches the precision of the multiplicands, \code{BottomClippedProduct} loses its computational advantage.  In particular, if the top part of the product is desired, then  there is no advantage at all.   While the previous methods apply equally to integers and polynomials, here carries complicate the integer case. It is possible to compute polynomial clipped products with higher ranges only using polynomial reversal.  Let
\[
\arev x p = x^{\adeg x p} p(1/x).
\]
Then the product clipped to $[a..\gdeg f + \gdeg g]$, \textit{i.e.} the top, may be computed as shown in Figure~\ref{fig:top_clipped_via_reverse}.
\begin{figure}[t]
\begin{lstlisting}[language=maple,style=lstFigStyle]
TopClippedPolynomialProduct := proc(f, g, x, a)
    local degf, degg, revf, revg, p, r;
    degf := degree(f, x); degg := degree(g, x);
    revf := rev(f, x);    revg := rev(g, x);
    p := BottomClippedProduct(revf, revg, x, degf+degg-a);
    r := x^a * rev(p, x);
    return r
end
\end{lstlisting}
\caption{Top clipped product using polynomial reversal}
\label{fig:top_clipped_via_reverse}
\end{figure}
The time is determined by the \code{BottomClippedProduct} computation,
\[
  M_{\mathrm{bot}}(\gprec f, \gprec g, \gprec f + \gprec g + a - 2).
\]
\paragraph{Example}
To compute $\aclip{x,[6..8]} (f \times g),$
we first compute 
\begin{align*}
\arev x f &= -62 x^3 +10x^2 +83x +4,
&
\arev x g &= 75 x^5 +17x^4 -71x^3 +44x^2 -80x + 82.
\end{align*}
Then $\adeg x f + \adeg x g - a = 3 + 5 - 6 = 2.$  So we compute
\[
p = \aclip {x,[0..2]}(\arev x f \times \arev x g) =-5644*x^2 + 6486*x + 328 
\]
and the result is
\[
x^6 \arev x p = 328x^8 + 6486x^7 -5644x^6.
\]

\subsection*{Disadvantage of Straightforward Methods}
These straightforward methods provide clipped products that generally cost less than clipping a full product, but they still perform significant extra work.   
Forming the $[a..b]$ clipped product via \code{BottomClippedProduct} on $0..b$ computes $2b+1$ coefficients where only $b-a+1$ are required.  This is a significant difference for $O(N^p)$ multiplication methods such as the classical or Karatsuba algorithms.
We therefore consider how to improve on this.

 Mulders~\cite{mulders} provides a more sophisticated method to compute products clipped to $[0..b]$ and hence the others.
 Mulders' method is discussed further in Section~\ref{sec:combining}.  
 If we do not require the entire prefix, that is for general $[a..b]$, then we can compute only what is needed, as shown in the next sections.

\section{Clipped Polynomial Products}
\label{sec:clippedpoly}
For the moment we concentrate on polynomial products, since this avoids the technicalities of dealing with carries.  In practice, different multiplication algorithms give best performance for ranges of problem size.   We therefore consider modified classical, Karatsuba and FFT multiplication.

\subsection*{Clipped Classical Polynomial Multiplication}
Classical multiplication of univariate polynomials $f$ and $g$ requires $\gprec f \times \gprec g$ coefficient multiplications and a similar number of coefficient additions.  It is easy to compute only the desired coefficients, as shown in Figure~\ref{fig:clipped_classical_poly}.
% df = 7, dg = 4
%
%                7.0 6.0 5.0 4.0 3.0 2.0 1.0 0.0
%            7.1 6.1 5.1 4.1 3.1 2.1 1.1 0.1
%        7.2 6.2 5.2 4.2 3.2 2.2 1.2 0.2
%    7.3 6.3 5.3 4.3 3.3 2.3 1.3 0.3
%7.4 6.4 5.4 4.4 3.4 2.4 1.4 0.4
%AAAAAAAAAAAAAAA BBBBBBBBBBBBBBB CCCCCCCCCCCCCCC
% df + dg >= k >= df,  p_k = sum(f_i*g_{k-i}, i = k-df..k)
%      df >= k >= dg,  p_k = sum(f_i*g_{k-i), i = k-dg..k)
%      dg >= k >= 0,   p_k = sum(f_i*g_{k-i}, i = 0..k)
\begin{figure}[t]
\begin{lstlisting}[language=maple,style=lstFigStyle]
ClippedClassicalPolynomialProduct := proc(f0, g0, x, a, b)
    local f, g, df, dg, s, k, t, i, i0;
    
    # Ensure dg <= df
    f := f0; df := degree(f, x);
    g := g0; dg := degree(g, x);
    if dg > df then f,g := g,f;  df,dg := dg,df fi;

    # Form column sums.
    s := 0;                    # Note A
    for k from a to b do
        if   k <= dg then i0 := 0
        elif k <= df then i0 := k-dg
        else              i0 := k-df
        fi;
        t := 0;
        for i from i0 to k do t := t + coeff(f,x,i) * coeff(g,x,k-i) od;
        s := setcoeff(s,x,k,t) # Note A. s + t*x^k
    od;
    return s
end
\end{lstlisting}
        % Was s := setcoeff(x,x,k,t) # Note A. s + t*x^k
\caption{Clipped classical polynomial multiplication}
\label{fig:clipped_classical_poly}
\end{figure}
This is Maple code, so \code{coeff(f,x,i)} computes $f_i$, the coefficient of $x^i$.  The statement \code{s := setcoeff(s,x,k,t)} uses an auxilliary function to set the coefficient of $x^k$ in $s$ to be $t$.
Depending on the implementation language, the lines annotated \code{Note A} would typically be setting values in a coefficient array indexed from zero.   A suitable representation would be as a pair of the integer value \code{a} and a coefficient array \code{coeffs}, where \code{coeffs[i]} was the coefficient of $x^{a+i}$.

The greatest number of coefficient operations occurs when $a$ and $b$ both are in $[\gdeg g..\gdeg f]$, in which case $(b-a+1)*(\gdeg g+1)$ multiplications and $(b-a+1)*(\gdeg g)$ additions are required.  We therefore have the bound
\[
M_{\mathrm{cclip}}(\gprec f, \gprec g, a, b) \le (b-a+1)\times (\min(\gdeg f, \gdeg g) + 1).
\]

\paragraph{Example}
Let $\gdeg f = 7$, $\gdeg g = 4$, $a = 5$ and $b = 7$.  Then the multiplication table looks like the following with the bold face entries calculated:

\begin{center}
\begin{tabular}{c@{~~}c@{~~}c@{~~}c@{~~}c@{~~}c@{~~}c@{~~}c@{~~}c@{~~}c@{~~}c@{~~}c}
        &        &        &        &$\mathbf{f_7g_0}$&$\mathbf{f_6g_0}$&$\mathbf{f_5g_0}$&$f_4g_0$&$f_3g_0$&$f_2g_0$&$f_1g_0$&$f_0g_0$  \\
        &        &        &$f_7g_1$&$\mathbf{f_6g_1}$&$\mathbf{f_5g_1}$&$\mathbf{f_4g_1}$&$f_3g_1$&$f_2g_1$&$f_1g_1$&$f_0g_1$&          \\
        &        &$f_7g_2$&$f_6g_2$&$\mathbf{f_5g_2}$&$\mathbf{f_4g_2}$&$\mathbf{f_3g_2}$&$f_2g_2$&$f_1g_2$&$f_0g_2$&        &          \\
        &$f_7g_3$&$f_6g_3$&$f_5g_3$&$\mathbf{f_4g_3}$&$\mathbf{f_3g_3}$&$\mathbf{f_2g_3}$&$f_1g_3$&$f_0g_3$&        &        &          \\
$f_7g_4$&$f_6g_4$&$f_5g_4$&$f_4g_4$&$\mathbf{f_3g_4}$&$\mathbf{f_2g_4}$&$\mathbf{f_1g_4}$&$f_0g_4$&        &        &        &     
\end{tabular}
\end{center}
So here 15 multiplications and 12 additions are required rather than the 40 and 28 that would be required by \code{ClippedProductFromBottom}.

\subsection*{Clipped Karatsuba Polynomial Multiplication}
The Karatsuba multiplication scheme~\cite{karatsuba} splits the multiplicands in two and forms the required product using three recursive multiplications and four additions.
For simplicity, assume $\gprec f = \gprec g = p = 2^n$.
Let 
$f = f_h x^{p/2} + f_l$,
$g = g_h x^{p/2} + g_l$,
$f_m = f_h + f_l$ and $g_m = g_h+ g_l$.
For Karatsuba multiplication the three products $f_h \cdot g_h$,  $f_m \cdot g_m$ and $f_l \cdot g_l$ are required.
Then the product $f \cdot g = z_h x^p + z_m x^{p/2} + z_l$ where
\begin{align*}
    z_h &= f_h g_h, &
    z_m &= f_m g_m - f_h g_h - f_l g_l, &
    z_l &= f_l g_l.
\end{align*}

We show that using clipping while computing a product by Karatsuba multiplication can significantly reduce the cost compared to clipping after the product is computed.
Let $K(p)$ denote the number of required coefficient multiplications for Karatsuba multiplication of polynomials of $\gprec = p$. We have
$K(p) = 3 K(p/2)$ for $p > 1$ so $K(p) = p^{\log_2 3}$.
If only a top (or bottom) part of the product is needed, then less work is required.  
Suppose only a top portion of the coefficients are required.
Instead of computing all the required smaller products in full, some may be ignored and some may require only their upper parts.  
This reasoning leads to the following result.

\begin{theorem}
When multiplying polynomials of degree $p-1$ by Karatsuba's method, 
if only the top (or bottom)
  $1/2^\ell$ 
fraction of the coefficients are required, 
with $2^\ell < p$,
then the number of required coefficient multiplications
is at most $K(p)/3^{\ell-1}$.
\label{thm:topfrac}
\end{theorem}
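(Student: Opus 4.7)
The plan is to prove this by induction on $\ell$, exploiting the fact that in Karatsuba's decomposition only one of the three sub-products touches the top (respectively bottom) sliver of the result. I will handle the top case; the bottom case follows by symmetry, or equivalently by reversing the polynomials as in Figure~\ref{fig:top_clipped_via_reverse}.

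Let $T_\ell(p)$ denote the number of coefficient multiplications needed to obtain the top $1/2^\ell$ fraction of the Karatsuba product of two polynomials of precision $p$. The base case $\ell = 1$ is immediate: computing the full product gives $T_1(p) \le K(p) = K(p)/3^{\,0}$.

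For the inductive step with $\ell \ge 2$, I would unfold one level of Karatsuba, writing $f \cdot g = z_h\, x^p + z_m\, x^{p/2} + z_l$, whose three pieces are supported on positions $[p, 2p-2]$, $[p/2, 3p/2-2]$, and $[0, p-2]$ respectively. The clipped output is supported on roughly $[2p - p/2^{\ell-1},\, 2p-2]$. The central calculation is to verify that for $\ell \ge 2$ this range lies strictly above $3p/2 - 2$, which reduces to the elementary inequality $p/2^{\ell-1} \le p/2$. Hence $z_m$ and $z_l$ contribute nothing, and only a top slice of $z_h = f_h g_h$ needs to be computed --- namely its top $p/2^{\ell-1}$ coefficients, which is the top $1/2^{\ell-1}$ fraction of a Karatsuba product of precision $p/2$. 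By the inductive hypothesis this costs at most $K(p/2)/3^{\ell-2} = K(p)/3^{\ell-1}$, closing the induction. The theorem's hypothesis $2^\ell < p$ guarantees that $\ell-1$ halvings of the problem size are available before the recursion bottoms out into a scalar multiplication.

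The main point needing care is the position accounting: checking that the required top fraction really falls strictly inside the support of $z_h$ whenever $\ell \ge 2$, so that dropping $z_m$ and $z_l$ is lossless rather than merely lossy. A minor related nuisance is how to interpret ``top $1/2^\ell$ fraction'' exactly --- I would take it to mean the top $p/2^{\ell-1}$ coefficients, which aligns the halving on the geometry side with the halving of the precision on the recursion side. Once that inequality is in hand, the recurrence $T_\ell(p) \le T_{\ell-1}(p/2)$ telescopes cleanly against $K(p) = 3\,K(p/2)$, and the bottom case follows by the mirror-image argument.
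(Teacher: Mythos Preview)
Your proof is correct and follows essentially the same route as the paper: for $\ell \ge 2$ only $z_h$ is relevant, giving the recurrence $T_\ell(p) \le T_{\ell-1}(p/2)$, which unwinds to $K(p)/3^{\ell-1}$. The one difference is the base case: the paper actually analyses $\ell=1$ via the recurrence $T(p,1)=K(p/2)+2T(p/2,1)\le K(p)$ (needing full $z_h$ and the top halves of $f_mg_m$ and $f_lg_l$), whereas you invoke the trivial full-product bound --- both yield $K(p)$, so your simplification is harmless here.
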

\begin{proof}
Let $f$ and $g$ be the two precision $p$ polynomials to be multiplied and let $T(p,\ell)$ denote the number of coefficient multiplications required for the top $1/2^\ell$ fraction of the product.
If $\ell = 1$,
then the full product $z_h$ is needed, as is the top half of $z_m$.  This requires all of $f_h g_h$ and the top half of $f_m g_m$ and $f_l g_l$, so
\begin{align*}
    T(p,1) &= K(p/2) + 2T(p/2, 1) \\
           &\le K(p).
\end{align*}
If $\ell > 1$, then only the top $1/2^{\ell -1}$ fraction of the coefficients of $z_h$ are required and neither $z_m $ nor $z_l$ are needed. So
\begin{align*}
    T(p, \ell) &= T(p/2, \ell-1) \\
               &\le K(p/2^{\ell-1})  = K(p) / 3^{\ell-1}.
\end{align*}
Together, these cases give the result for the top $1/2^\ell$ fraction.  A similar argument gives the result for the bottom.
\blarghQED
\end{proof}
\noindent
Note that for $p = 2^N$, we have
\(
    K(p) / 3^{\ell-1} = 3^{N-\ell+1}+1, 
\)
an integer.
\\[\baselineskip]
The idea of pushing the clipping range down onto the Karatsuba sub-products is straightforward to implement, as shown in Figure~\ref{fig:clipped_karatsuba_poly}.
\begin{figure}[t]
\begin{lstlisting}[language=maple,style=lstFigStyle]
ClippedKaratsubaPolynomialProduct := proc(f, g, x, a, b, gmul)
    local df, dg, p, fh,gh, fl,gl,fm, gm, restrict,
          zha, zhb, zma, zmb, zla, zlb, zh, zm, zl;
          
    df := gdegree(f, x);  dg := gdegree(g, x);@\verb+                       +@# Note A

    if a > df+dg then return 0 fi;
    if a > b     then return 0 fi;
    if b = 0     then return gmul(coeff(f,x,0),coeff(g,x,0)) fi; # Note A

    # Size of z parts.
    p := max(df, dg)+1;
    p := p + irem(p, 2);    # Make even

    # Product factors.
    fh := clip(f,x,p/2,p-1)*x^(-p/2); fl := clip(f,x,0,p/2-1);@\verb+    +@# Note A
    gh := clip(g,x,p/2,p-1)*x^(-p/2); gl := clip(g,x,0,p/2-1);@\verb+    +@# Note A
    fm := fh + fl; gm := gh + gl;
    
    # Cases where zm is not needed.
    if b < p/2 then
        return ClippedKaratsubaPolynomialProduct(fl,gl,x,a,  b,  gmul)
    elif a > 3*p/2-2 then
        return ClippedKaratsubaPolynomialProduct(fh,gh,x,a-p,b-p,gmul)
               * x^p
    fi;

    # Need all products. 
    restrict :=   # return i or the closest interval endpoint.
        proc(i) if i < 0 then 0 elif i > p-2 then p-2 else i fi end:
    zha := restrict(a - p);   zhb := restrict(b - p);
    zma := restrict(a - p/2); zmb := restrict(b - p/2);
    zla := restrict(a);       zlb := restrict(b);

    # Expand high and low product ranges as needed for zm.
    zha := min(zha, zma); zhb := max(zhb, zmb);
    zla := min(zla, zma); zlb := max(zlb, zmb);
    zh  := ClippedKaratsubaPolynomialProduct(fh,gh,x,zha,zhb,gmul);
    zl  := ClippedKaratsubaPolynomialProduct(fl,gl,x,zla,zlb,gmul);
    zm  := ClippedKaratsubaPolynomialProduct(fm,gm,x,zma,zmb,gmul)
           - zh - zl;

    # Combine and clip
    return clip(zh*x^p + zm*x^(p/2) + zl, x, a, b)@\verb+                  +@# Note A
end
\end{lstlisting}
\caption{Clipped Karatsuba polynomial product}
\label{fig:clipped_karatsuba_poly}
\end{figure}
As for the classical case, if polynomials are represented using arrays, the lines annotated with \code{Note A} are performed with array operations.
The parameter \code{gmul} is the coefficient multiplication function and \code{gdegree} gives the degree of the polynomial in whatever representation is used.  
In practice, one would have a cut over to classical multiplication for small sizes rather than recursing all the way down to single coefficients.

\begin{table}[tb]
\caption{Coefficient multiplications count for $f \times g$ by clipped Karatsuba multiplication}
\begin{scriptsize}
\begin{tabular}{r@{~}|@{~}rrrrrrrrrrrrrrrrrrrrrrrrrrrrrrr}
$a \backslash b$
 &~0 & ~1 & ~2 & ~3 & ~4 & ~5 & ~6 & ~7 & ~8 & ~9 & 10 & 11 & 12 & 13 & 14 & 15 & 16 & 17 & 18 & 19 & 20 & 21 & 22 & 23 & 24 & 25 & 26 & 27 & 28 & 29 & 30 \\
 \hline
  0& 1& 3& 5& 9& 11& 15& 19& 27& 29& 33& 37& 45& 49& 56& 64& 80& 80& 80& 80& 80& 80& 80& 80& 80& 80& 80& 80& 80& 80& 80& 80 \\
  1 &  & 3& 5& 9& 11& 15& 19& 27& 29& 33 & 37& 45& 49& 56& 64& 80& 80& 80& 80& 80& 80& 80& 80& 80& 80& 80& 80& 80& 80& 80& 80 \\
  2 &  &   & 5& 9& 11& 15& 19& 27& 29& 33 & 37& 45& 49& 56& 64& 80& 80& 80& 80& 80& 80& 80& 80& 80& 80& 80& 80& 80& 80& 80& 80 \\
  3 &  &   &   & 9& 11& 15& 19& 27& 29& 33 & 37& 45& 49& 56& 64& 80& 80& 80& 80& 80& 80& 80& 80& 80& 80& 80& 80& 80& 80& 80& 80 \\
  4 &  &   &   &   & 11& 15& 19& 27& 29& 33 & 37& 45& 49& 56& 64& 80& 80& 80& 80& 80& 80& 80& 80& 80& 80& 80& 80& 80& 80& 80& 80 \\
  5 &  &   &   &   &   & 15& 19& 27& 29& 33 & 37& 45& 49& 56& 64& 80& 80& 80& 80& 80& 80& 80& 80& 80& 80& 80& 80& 80& 80& 80& 80 \\
  6 &  &   &   &   &   &   & 19& 27& 29& 33 & 37& 45& 49& 56& 64& 80& 80& 80& 80& 80& 80& 80& 80& 80& 80& 80& 80& 80& 80& 80& 80 \\
  7 &  &   &   &   &   &   &   & 27& 29& 33 & 37& 45& 49& 56& 64& 80& 80& 80& 80& 80& 80& 80& 80& 80& 80& 80& 80& 80& 80& 80& 80 \\
  8 &  &   &   &   &   &   &   &   & 29& 33 & 37& 45& 49& 56& 64& 80& 80& 80& 80& 80& 80& 80& 80& 80& 80& 80& 80& 80& 80& 80& 80 \\
  9 &  &   &   &   &   &   &   &   &   & 33 & 37& 45& 49& 56& 64& 80& 80& 80& 80& 80& 80& 80& 80& 80& 80& 80& 80& 80& 80& 80& 80 \\
  10 &  &   &   &   &   &   &   &   &   &   & 37& 45& 49& 56& 64& 80& 80& 80& 80& 80& 80& 80& 80& 80& 80& 80& 80& 80& 80& 80& 80 \\
  11 &  &   &   &   &   &   &   &   &   &   &   & 45& 49& 56& 64& 80& 80& 80& 80& 80& 80& 80& 80& 80& 80& 80& 80& 80& 80& 80& 80 \\
  12 &  &   &   &   &   &   &   &   &   &   &   &   & 49& 56& 64& 80& 80& 80& 80& 80& 80& 80& 80& 80& 80& 80& 80& 80& 80& 80& 80 \\
  13 &  &   &   &   &   &   &   &   &   &   &   &   &   & 56& 64& 80& 80& 80& 80& 80& 80& 80& 80& 80& 80& 80& 80& 80& 80& 80& 80 \\
  14 &  &   &   &   &   &   &   &   &   &   &   &   &   &   & 64& 80& 80& 80& 80& 80& 80& 80& 80& 80& 80& 80& 80& 80& 80& 80& 80 \\
  15 &  &   &   &   &   &   &   &   &   &   &   &   &   &   &   & 80& 80& 80& 80& 80& 80& 80& 80& 80& 80& 80& 80& 80& 80& 80& 80 \\
  16 &  &   &   &   &   &   &   &   &   &   &   &   &   &   &   &   & 64& 64& 64& 64& 64& 64& 64& 64& 64& 64& 64& 64& 64& 64& 64 \\
  17 &  &   &   &   &   &   &   &   &   &   &   &   &   &   &   &   &   & 56& 56& 56& 56& 56& 56& 56& 56& 56& 56& 56& 56& 56& 56 \\
  18 &  &   &   &   &   &   &   &   &   &   &   &   &   &   &   &   &   &   & 48& 48& 48& 48& 48& 48& 48& 48& 48& 48& 48& 48& 48 \\
  19 &  &   &   &   &   &   &   &   &   &   &   &   &   &   &   &   &   &   &   & 44& 44& 44& 44& 44& 44& 44& 44& 44& 44& 44& 44 \\
  20 &  &   &   &   &   &   &   &   &   &   &   &   &   &   &   &   &   &   &   &   & 36& 36& 36& 36& 36& 36& 36& 36& 36& 36& 36 \\
  21 &  &   &   &   &   &   &   &   &   &   &   &   &   &   &   &   &   &   &   &   &   & 32& 32& 32& 32& 32& 32& 32& 32& 32& 32 \\
  22 &  &   &   &   &   &   &   &   &   &   &   &   &   &   &   &   &   &   &   &   &   &   & 28& 28& 28& 28& 28& 28& 28& 28& 28 \\
  23 &  &   &   &   &   &   &   &   &   &   &   &   &   &   &   &   &   &   &   &   &   &   &   & 26& 26& 26& 26& 26& 26& 26& 26 \\
  24 &  &   &   &   &   &   &   &   &   &   &   &   &   &   &   &   &   &   &   &   &   &   &   &   & 18& 18& 18& 18& 18& 18& 18 \\
  25 &  &   &   &   &   &   &   &   &   &   &   &   &   &   &   &   &   &   &   &   &   &   &   &   &   & 14& 14& 14& 14& 14& 14 \\
  26 &  &   &   &   &   &   &   &   &   &   &   &   &   &   &   &   &   &   &   &   &   &   &   &   &   &   & 10& 10& 10& 10& 10 \\
  27 &  &   &   &   &   &   &   &   &   &   &   &   &   &   &   &   &   &   &   &   &   &   &   &   &   &   &   & 8& 8& 8& 8 \\
  28 &  &   &   &   &   &   &   &   &   &   &   &   &   &   &   &   &   &   &   &   &   &   &   &   &   &   &   &   & 4& 4& 4 \\
  29 &  &   &   &   &   &   &   &   &   &   &   &   &   &   &   &   &   &   &   &   &   &   &   &   &   &   &   &   &   & 3& 3 \\
  30 &  &   &   &   &   &   &   &   &   &   &   &   &   &   &   &   &   &   &   &   &   &   &   &   &   &   &   &   &   &   & 1
\end{tabular}
\end{scriptsize}
\label{tbl:numops}
\end{table}
Table~\ref{tbl:numops} shows how many coefficient multiplications are required by the above implementation to multiply the two polynomials
\begin{align*}
    f =& \phantom{\,+\,}
 \mtwo{-89 & 56  \\ -96 & 72  } x^{15}
+\mtwo{-8  & 64  \\ -32 & 61  } x^{14}
+\mtwo{45  & 66  \\ 69  & 76  } x^{12}
\\&
+\mtwo{-96 & 47  \\ 15  & -85 } x^{11}
+\mtwo{-96 & 62  \\ -74 & -65 } x^{10}
+\mtwo{-92 & 54  \\ -18 & -64 } x^9
+\mtwo{-56 & -28 \\ 56  & -8  } x^8
\\&
+\mtwo{23  & -31 \\ -85 & 94  } x^7
+\mtwo{-45 & -58 \\ 73  & -70 } x^6
+\mtwo{-6  & -7  \\ 72  & 4   } x^5
+\mtwo{-64 & 61  \\ 10  & 45  } x^4
\\&
+\mtwo{-29 & -43 \\ -95 & 16  } x^3
+\mtwo{31  & -9  \\ -42 & 28  } x^2
+\mtwo{-52 & -87 \\ -51 & -27 } x
+\mtwo{-48 & -33 \\ -55 & -22 } \\
\end{align*}
\begin{align*}
    g =& \phantom{\,+\,}
 \mtwo{ 1   & 75  \\ 7   & -15 } x^{15}
+\mtwo{ -22 & 43  \\ 85  & 25  } x^{14}
+\mtwo{ -29 & -90 \\ -38 & 3   } x^{13}
+\mtwo{ 39  & -92 \\ 0   & 18  } x^{12}
\\&
+\mtwo{ 56  & 41  \\ -53 & 6   } x^{11}
+\mtwo{ -10 & 53  \\ -8  & 83  } x^{10}
+\mtwo{ 58  & -98 \\ 61  &  1  } x^{9}
+\mtwo{ -28 & 7   \\ 17  & 36  } x^{8}
\\&
+\mtwo{ -64 & 16  \\ -58 & 64  } x^{7}
+\mtwo{ -76 & -66 \\ 83  & 76  } x^{6}
+\mtwo{  6  & 3   \\ 34  & 8   } x^{5}
+\mtwo{ -80 & -71 \\ -15 & 88  } x^{4}
\\&
+\mtwo{ -9  & -83 \\ 77  & 28  } x^{3}
+\mtwo{ 59  & -28 \\ 48  & 94  } x^{2}
+\mtwo{ 40  & -91 \\ -34 & 32  } x
+\mtwo{ 8   & 39  \\ 9   & -28 } 
\end{align*}
We see the cost of the clipped Karatsuba depends on how close the clipping interval is to the center of the product.  If $a \le b \le p$ then, it depends on $b$, and if $p \le a \le b$ it depends on $a$.  If the interval spans the center, then no savings are achieved.

\subsection*{Clipped FFT Polynomial Multiplication}
The basic scheme for FFT-based multiplication of polynomials $f$ and $g$ starts with zero padded coefficient vectors for $f$ and $g$ of dimension $N = 2^{\lceil \log_2 (\gdeg f + \gdeg g + 1) \rceil}$,
\begin{align*}
    v_f &= [f_0, \ldots, f_{\gdeg f}, 0, \ldots 0] \\
    v_g &= [g_0, \ldots, g_{\gdeg g}, 0, \ldots 0].
\end{align*}
Then FFTs in an appropriate field are computed
\begin{align*}
    \tilde v_f &= \mathrm{FFT} (v_f) \\
    \tilde v_g &= \mathrm{FFT} (v_g).
\end{align*}
The FFT of the product is computed as
\[
    \tilde v_{fg} = [(\tilde v_f)_0 \times (\tilde v_g)_0, \ldots,
                  (\tilde v_f)_{N-1} \times (\tilde v_g)_{N-1}],
\]
where $\times$ denotes multiplication in the chosen field. The coefficient vector for the product is then the inverse FFT of $\tilde v_{fg}$.   Asymptotically, an FFT requires fewer than $\tfrac32 N \log_2 N$ coefficient multiplications. This yields a polynomial multiplication cost triple that, from the two forward and one inverse FFT. 

\begin{table}
\caption{Coefficient dependencies for inverse FFT on 16 elements}
\begin{center}
    \begin{tabular}{c@{~~~~~}c}
    \textbf{\textit{i}}             & $\mathbf{FFT^{-1}}(\tilde{\text{\textbf{\textit v}}}_\text{\textbf{\textit i}})$  \textbf{dependencies} \\
    \hline
    0    &  $\tilde v_0$ \\
    1    &  $\tilde v_0$, $\tilde v_1$ \\
    2    &  $\tilde v_0$, $\tilde v_2$ \\
    3    &  $\tilde v_1$, $\tilde v_2$, $\tilde v_3$ \\
    4    &  $\tilde v_0$, $\tilde v_4$ \\
    5    &  $\tilde v_1$, $\tilde v_4$, $\tilde v_5$ \\
    6    &  $\tilde v_2$, $\tilde v_4$, $\tilde v_6$ \\
    7    &  $\tilde v_3$, $\tilde v_5$, $\tilde v_6$, $\tilde v_7$ \\
    8    &  $\tilde v_0$, $\tilde v_8$ \\
    9    &  $\tilde v_1$, $\tilde v_8$, $\tilde v_9$ \\
    10 &  $\tilde v_2$, $\tilde v_8$, $\tilde v_{10}$ \\
    11 &  $\tilde v_3$, $\tilde v_9$, $\tilde v_{10}$, $\tilde v_{11}$ \\
    12 &  $\tilde v_4$, $\tilde v_8$, $\tilde v_{12}$ \\
    13 &  $\tilde v_5$, $\tilde v_9$, $\tilde v_{12}$, $\tilde v_{13}$ \\
    14 &  $\tilde v_6$, $\tilde v_{10}$, $\tilde v_{12}$, $\tilde v_{14}$ \\
    15 &  $\tilde v_7$, $\tilde v_{11}$, $\tilde v_{13}$, $\tilde v_{14}$, $\tilde v_{15}$
    \end{tabular}
\end{center}
\label{tbl:fftdep}
\end{table}
Whereas with classical and Karatsuba multiplication, computing unneeded terms has an $O(N^p)$ cost, computing extra FFT terms has only a quasi-linear cost.  So in many situations the straightforward methods of Section~\ref{sec:straightforward} may be used.
To compute a clipped multiplication it will sometimes be possible to avoid computing some of the coefficients of $\tilde v_{fg}$.   The butterfly transformation of the Cooley-Tukey algorithm~\cite{cooley-tukey} does not have every coefficient of $\mathrm{FFT}^{-1}(\tilde v)$ depend on every coefficient of $\tilde v$.   For example, for $N=16$ the dependencies are shown in Table~\ref{tbl:fftdep}.
Thus, depending on the clipping range it is possible to avoid computing some coefficients of $\widetilde {fg}$. 

We do not pursue this in detail here since our focus in this paper is on products from general purpose applications --- but our observation shows that even for extreme precision there is scope for special treatment where clipped products are required.

In the situation where $f\times g$ is of size $2N\times N$, if only the middle third of the product is required, then the method of Hanrot, Quercia and Zimmerman~\cite{HanQuerZim} may be applied, saving a factor of 2.
It remains an open question how far this technique can be generalized.
Nonetheless, there will be clipping ranges and argument lengths where it is worthwhile to pad the factors to use this method.

Additionally, the concept of a truncated FFT~\cite{vdHoevenTruncatedFFT} can be useful.  It computes all of the terms, but avoids unnecessary multiplications.  Other work~\cite{vdHoevenLecerfBlockwise} shows how multiplication may be performed when the arguments may be decomposed into blocks.  

\section{Clipped Integer Multiplication}
\label{sec:clippedint}
The methods for clipped polynomial products can be adapted for the computation of clipped integer products by an analysis of the required number of lower guard digits for carries.
We show how this can be done for classical multiplication.

\subsection*{Clipped Classical Integer Multiplication}

The idea here is really simple. To clip the product to digits 
from $a$ to $b$ you form a na\"{\i}ve clipped product from
$a-G$ to $b$   and then only keep digits from $a$ up.
So $G$ is a number of
guard digits. The sole issue is selecting a value for $G$. And
indeed understanding if there is any value of $G$ that guarantees
correct results. This amounts to a need to understand how
far carries can propagate through the process of the addition that
combines partial products. While in general addition there is no
limit to how far carries can propagate, here there are limits,
and they flow from the fact that if $B$ is the base the
largest a coefficient product can be is $(B-1)^2$ and the high half
of that is quite a lot smaller than $B$.

\begin{theorem}
    To compute the $[a..b]$ clipped product of two base-$B$ integers $f$ and $g$ 
    %%% Think about this
    to within 1 unit in position $a$
    using classical multiplication with partial products $f \cdot g_i B^i$
    requires at most $G = \min(a, \lceil \log_B \gprec g \rceil + 1)$ guard digits.
\end{theorem}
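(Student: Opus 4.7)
The plan is to decompose the product into partial products, bound the truncation error of each, sum the errors, and verify that the total stays within one unit at position $a$. Write $fg = \sum_{i=0}^{\gprec g - 1} p_i$, where $p_i = f \cdot g_i \cdot B^i$. In the clipped classical algorithm with $G$ guard digits, each $p_i$ is accumulated only from position $a-G$ upward, which amounts to replacing it by $p_i' = B^{a-G}\lfloor p_i / B^{a-G} \rfloor$; the desired digits are then read off from $S' := \sum_i p_i'$. The goal is to show that $D := fg - S'$ satisfies $D < B^a$, which implies $\lfloor fg/B^a \rfloor - \lfloor S'/B^a \rfloor \in \{0,1\}$ and hence gives accuracy within one unit at position $a$.

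The key estimate is on $D$. Each discard satisfies $0 \le p_i - p_i' \le B^{a-G} - 1$, since we are simply dropping the base-$B$ positions below $a-G$ of a non-negative integer. Summing over the $\gprec g$ partial products gives
\[
   D \;\le\; \gprec g \cdot (B^{a-G} - 1) \;<\; \gprec g \cdot B^{a-G}.
\]
Substituting $G = \lceil \log_B \gprec g \rceil + 1$ yields $B^{G-1} \ge \gprec g$, whence $D < B^{a-1} < B^a$, as required. The identity $\lfloor (S'+D)/B^a \rfloor = \lfloor S'/B^a \rfloor + \lfloor ((S' \bmod B^a) + D)/B^a \rfloor$, together with $(S' \bmod B^a) + D < 2B^a$, then confirms that the high digits of $S'$ differ from those of $fg$ by $0$ or $1$.

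It remains to justify the $\min(a, \cdot)$ in the stated bound. Base-$B$ digit positions are non-negative, so one cannot guard more than $a$ positions; if $a \le \lceil \log_B \gprec g \rceil + 1$, the maximum sensible choice is $G = a$, in which case $p_i' = p_i$ for all $i$ (no positions are discarded), the partial products are summed exactly, and the computation is exact. Combining both cases gives the stated bound. The main obstacle in the argument is the carry analysis: one must be careful that the aggregate mass discarded across all partial products, not merely the per-partial-product error, is what controls the carry into position $a$, and that this mass stays strictly below $B^a$ so that at most a single unit of carry propagates upward.
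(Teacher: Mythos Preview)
Your argument is correct and reaches the same conclusion, but it follows a different route from the paper. The paper works column by column: it sets up the recurrence $B\,c_{k+1} \le c_k + \gprec g\,(B-1)^2$ for the carry into each column, solves for the steady-state bound $c_{\max}\le \gprec g\,(B-1)$, and then observes that this carry has at most $\lceil\log_B\gprec g\rceil+1$ base-$B$ digits, so starting $G$ columns early with carry $0$ can cost at most one unit at position $a$. Your approach instead bounds the total discarded mass $D=fg-S'$ directly by summing the per-row truncation errors; this is more elementary and avoids the fixed-point style carry analysis entirely. What the paper's approach buys is that it is phrased exactly in terms of the column-by-column algorithm actually implemented; what yours buys is a shorter argument that makes the ``within one unit'' conclusion immediate from $D<B^a$.

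One modelling point deserves care. You write $p_i'=B^{a-G}\lfloor p_i/B^{a-G}\rfloor$, which truncates the \emph{already-formed integer} $p_i=f\cdot g_i\cdot B^i$ at position $a-G$. The paper's algorithm, however, sums columns: for row $i$ it keeps only the individual terms $f_j g_i B^{i+j}$ with $i+j\ge a-G$, which is \emph{not} the same thing (the carries internal to the single-row product $f\cdot g_i$ are lost). In that column model the per-row discard is bounded by $(B-1)\,B^{a-G}$ rather than $B^{a-G}-1$, so the total error bound becomes $D<\gprec g\,(B-1)\,B^{a-G}$. Fortunately the stated $G=\lceil\log_B\gprec g\rceil+1$ still gives $B^G\ge B\cdot\gprec g>(B-1)\gprec g$ and hence $D<B^a$, so the conclusion survives; but your bound $D<B^{a-1}$ is specific to the row-truncation reading and should not be claimed for the column-based algorithm the paper actually analyses.
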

\noindent
By this we mean that the $[a..b]$ clipped product requires summing only the columns for $(fg)_i$, for $a-G \le i \le b$, and the columns $[a-G..a-1]$ are used solely to determine a carry.
\begin{proof}
Let $\gprec f = s+1$ and $\gprec g = t+1$, and
let $c_k$ be the carry into the $k$-th column in the full multiplication tableau,
\[
\begin{array}{c@{~~}c@{~~~}c@{~~~}c@{~~}c@{~~~}c@{~~}c@{~~}c}
 c_{s+t+1} & c_{s+t} & c_{s+t-1} & \cdots & c_s & c_{s-1} & \cdots & c_0 = 0\\
  && & & f_s\cdot g_0 & f_{s-1} \cdot g_0 & \cdots & f_0 \cdot g_0 \\
  && & \iddots\phantom{xx} &  &&  & \iddots\phantom{xx}  \\
  && \phantom{xxxxxxx}\iddots &  & & & \phantom{xxxxxxx}\iddots  \\
  && f_s\cdot g_{t-1} & f_{s-1} \cdot g_{t-1} & \cdots & \cdots & f_0 \cdot g_{t-1} \\
&f_s\cdot g_t & f_{s-1} \cdot g_t & \cdots & \cdots & f_0 \cdot g_t
\end{array}
\]
The sum for the $k$-the column is
\[
s_k = c_k + \sum_{i = 0}^k f_i g_{k-i}  % \label{eq:sk}
    \le c_k + \gprec g \cdot (B-1)^2.   % \label{eq:sk-bound}
\]
The equation %~\eqref{eq:sk} 
uses the fact that $u_i = 0$ for $i \notin [0..\gprec u)$, and
the inequality %~\eqref{eq:sk-bound} 
takes into account both that $f_i g_j \le (B-1)^2$ and that $g_{k-i} = 0$ when $i > k$ (\textit{i.e.} the empty lower right hand triangle in the tableau).
The carry into the next column is
\[
c_{k+1} = \lfloor s_k / B \rfloor  \le s_k/B
\]
so
\[
B~c_{k+1} \le c_k + \gprec g \cdot (B-1)^2.
\]
If there is a maximum carry, $c_{\max}$, then it must satisfy
\[
B\,c_{\max} \le  c_{\max} + \gprec g \cdot (B-1)^2 
\]
so we have
\[
c_{\max} \le  \gprec g \cdot (B-1).
\]
It follows that at most $\lceil \log_B \gprec g \rceil + 1$ guard digits are required to guarantee the correct value carried into the clipping range.  No guard digits can be taken before position 0,
therefore clipped classical multiplication for integers needs to sum columns $[a-G..b]$, for $G = \min(a,\lceil \log_B \gprec g \rceil + 1)$, and the columns $[a-G..a-1]$ are used only to compute $c_a$.
\blarghQED
\end{proof}
\noindent
Note that in general a carry can have multiple digits.  With modern computers, the base can be chosen to be large (\textit{e.g.} $2^{64}$) so $\lceil \log_B \gprec g \rceil$ will be 1 for all practical problems and 2 guard digits will suffice.  The details are shown in Figure~\ref{fig:clipped_classical_int}.

\begin{figure}[t]
\begin{lstlisting}[language=maple,style=lstFigStyle]
ClippedClassicalIntegerProduct := proc(f0, g0, B, a, b)
    local f, g, pf, pg, nGuard, carry, s, k, t, i, i0;
    
    # Ensure pg <= pf
    f := f0; pf := iprec(f, B);
    g := g0; pg := iprec(g, B);
    if pg > pf then f,g := g,f;  pf,pg := pg,pf fi;

    # Compute number of guard digits
    nGuard := min(a, ceil(log[B](iprec(g, B))) + 1);
    
    # Form column sums
    carry := 0;
    s     := 0;
    for k from a - nGuard to b do
        if   k < pg then i0 := 0
        elif k < pf then i0 := k - pg
        else             i0 := k - pf
        fi;
        
        t := carry;
        for i from i0 to k do t := t + icoeff(f,B,i) * icoeff(g,B,k-i) od;
        
        carry, t := iquo(t, B), irem(t, B);
        if i >= a then s := isetcoeff(s, B, k, t) fi;
    od;
    return clip(s, B, a, b)
end:

# These would be O(1) or O(b-a) array operations in C.
iprec     := (n, B)       -> ceil(log[B](n+1)):
icoeff    := (n, B, i)    -> irem(iquo(n, B^i), B):
isetcoeff := (n, B, i, v) -> n + v*B^i:
clip      := (n, B, a, b) -> irem(iquo(n, B^a), B^(b-a+1)):
\end{lstlisting}
\caption{Clipped classical integer product}
\label{fig:clipped_classical_int}
\end{figure}

\section{Combining Methods}
\label{sec:combining}
Some of the methods we have described require recursive multiplications of parts.  The recursive calls need not use the same multiplication method.   So there will be platform-dependent thresholds to determine when to use which method.   We note some additional considerations below.
\subsection*{Issues in Mulders Multiplication}
Mulders~\cite{mulders} considered short multiplication of power series, \textit{i.e.} keeping just the top $N$ terms of the product of two series each of length $N$. His scheme could save perhaps 20\% to 30\% of the time that would have been spent had the result been generated by using Karatsuba to form a full product and then just discarding the unwanted low terms. 

Full multiplication of the two inputs computes unwanted parts of the result that correspond to half the partial products that classical multiplication would use. Mulders performs a smaller
full multiply that still computes many unwanted terms, and which also leaves some necessary parts of the result incomplete. Henriot and Zimmermann~\cite{HanZim} investigated just what proportion of the full multiplication would be optimal, but for our purposes it will suffice to approximate that as $0.7N$ by $0.7N$. The parts of the result not computed by this will have the form of a couple of additional instanced of the short product problem, so get handled
by recursion. 

When this scheme is used for integer multiplication the issue of carries from a discarded low part of a product arise. As with
the same issue when using simple classical multiplication this can be handled by using some guard digits, and because Mulders works
with and then discards low partial products beyond those used in
a classical scheme a bound that is good enough for classical can be applied here.

The idea of using a fast complete multiplication that computes somewhat more of a product than will be needed and discarding the excess, but then needing to fill in some gaps, can be generalized
beyond $N\times N$ cases and beyond the case where exactly the top (or bottom) half of a product is needed, and the original Mulders overlap fraction can in general guide usage -- however all additional cases could need fresh analysis to find the exact optimal value for that parameter.

\subsection*{Different Methods for Different Ranges}
We now return to a discussion on our motivating problem, computing clipped integer products in the approximation scheme described in~\cite{watt-issac-2023}.   There, at various points, it is desired to compute clipped products for the most significant, least significant and middle parts of an asymmetric multiplication. Depending on the size of the prefix or suffix, we can use clipped classical, Karatsuba or FFT for these.   For the middle third, a more complex scheme may be used, as shown in Figure~\ref{fig:middle-third}, illustrating a product $f \times g$.
(Diagonal lines from upper left to lower right are multiples of $g$ by a term of $f$.
Diagonal lines from upper right to lower left are multiples of $f$ by a term of $g$.)
This is motivated by Mulders but needs to clip at both high and low ends. Use of a $3N$ by $2N$ variant of Toom-Cook~\cite{cook-phd,1963-toom} gives rather close to
the levels of overlap at each and that would be good for Mulders.

A different and somewhat extreme case would be where the number
of output digits (say $h$) required is much smaller than the size of either input. In that case a classical multiplication can clearly deliver a result in something like $h\times N$ but a decomposition of the strip as in Figure~\ref{fig:middle-third-method} turns out on analysis to start to win once $h$ is significantly larger that the threshold at which Karatsuba breaks even for all the tiled sub-products.

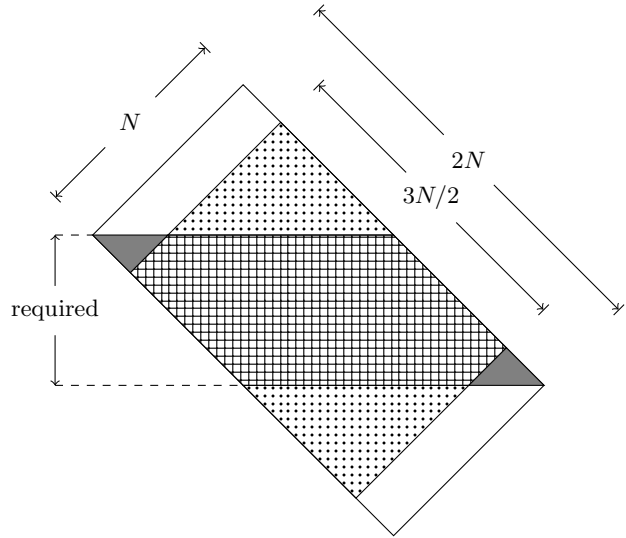
\begin{figure}
    \centering
    \begin{tikzpicture}
       % The big rectangle
       \draw (0,0) -- (2,2) -- (6,-2) -- (4,-4) -- cycle;   
       % The clipping region
       \draw[dashed] (-.5,0) -- (0,0);
       \draw[dashed] (-.5,-2) -- (2,-2);
       \draw[->] (-.5,-.677) --(-.5,0);
       \node[] at (-.5,-1) {required};
       \draw[->] (-.5,-1.333) --(-.5,-2);
       % The 3N/2 rectangle
       \draw[pattern=grid] (0.5,-0.5) -- (1,0) -- (4,0)-- (5.5,-1.5) -- (5,-2) -- (2,-2) -- cycle;
       \draw[pattern=dots] (1,0) -- (2.5,1.5) -- (4,0) -- cycle;
       \draw[pattern=dots] (5,-2) -- (3.5,-3.5) -- (2,-2) -- cycle;
       % The extra bits to add
       \draw[fill=gray] (0,0) -- (1,0) -- (0.5,-0.5) -- cycle;
       \draw[fill=gray] (5.5,-1.5) -- (6,-2) -- (5,-2) -- cycle;
       % NW label  <- N ->
       \draw [->|] (.167,1.167) -- (-.5,.5);
       \node[] at (.5,1.5) {$N$};
       \draw [->|] (.833,1.833) -- (1.5,2.5);
       % NE label <- 2N ->
       \draw [->|] (4.667,1.333) -- (3,3);
       \node[] at (5,1) {$2N$};
       \draw [->|] (5.333,.667) -- (7,-1);
       % NE label <- 3N/2 ->
       \draw [->|] (4.167,0.833) -- (3,2);
       \node[] at (4.5,.5) {$3N/2$};
       \draw [->|] (4.833,0.167) -- (6,-1);
    \end{tikzpicture}
    \caption{The desired region of the asymmetric $N \times 2N$ product.
    Horizontal lines are terms of equal degree,
    and the dashed lines show the range of terms desired.
    The $N \times 3N/2$ rectangle is calculated and the cross hatched area is used and the dotted area is not used.   The solid gray areas are calculated separately.}
    \label{fig:middle-third}
\end{figure}
\begin{figure}
    \centering
     \begin{tikzpicture}[scale=0.75]
       % The big rectangle
       \draw (0,0) -- (5,5) -- (10,0) -- (5,-5) -- cycle;   
       % The region desired
       \draw[dashed] (0,-.9) -- (10,-.9);
       \draw[dashed] (0,-2.1) -- (10,-2.1);
       \draw[->]     (0,-1.3) -- (0,-.9);
       \node[] at (0,-1.5) {required};
       \draw[->]     (0,-1.7) -- (0,-2.1);
       % The little rectangles
       \draw (1.5,-1.5) -- (2.5,-.5) -- (3.5,-1.5) -- (2.5,-2.5) -- cycle;
       \draw[shift={(1.25,0)}] (1.5,-1.5) -- (2.5,-.5) -- (3.5,-1.5) -- (2.5,-2.5) -- cycle;
       \draw[shift={(2.5,0)}] (1.5,-1.5) -- (2.5,-.5) -- (3.5,-1.5) -- (2.5,-2.5) -- cycle;
       \draw[shift={(3.75,0)}] (1.5,-1.5) -- (2.5,-.5) -- (3.5,-1.5) -- (2.5,-2.5) -- cycle;
       \draw[shift={(5,0)}] (1.5,-1.5) -- (2.5,-.5) -- (3.5,-1.5) -- (2.5,-2.5) -- cycle;
       % The first and last shaded triangles
       \draw[fill=gray] (.9,-.9) -- (2.1,-.9) -- (1.5,-1.5) -- cycle;
       \draw[fill=gray,shift={(7,0)}] (.9,-.9) -- (2.1,-.9) -- (1.5,-1.5) -- cycle;
       % The inner shadded triangles
       \draw[fill=gray] (2.9,-.9) -- (3.125,-1.125) -- (3.35, -.9) -- cycle;
       \draw[fill=gray] (2.9,-2.1) -- (3.125,-1.875) -- (3.35, -2.1) -- cycle;
       \draw[fill=gray,shift={(1.25,0)}] (2.9,-.9) -- (3.125,-1.125) -- (3.35, -.9) -- cycle;
       \draw[fill=gray,shift={(1.25,0)}] (2.9,-2.1) -- (3.125,-1.875) -- (3.35, -2.1) -- cycle;
       \draw[fill=gray,shift={(2.5,0)}] (2.9,-.9) -- (3.125,-1.125) -- (3.35, -.9) -- cycle;
       \draw[fill=gray,shift={(2.5,0)}] (2.9,-2.1) -- (3.125,-1.875) -- (3.35, -2.1) -- cycle;
       \draw[fill=gray,shift={(3.75,0)}] (2.9,-.9) -- (3.125,-1.125) -- (3.35, -.9) -- cycle;
       \draw[fill=gray,shift={(3.75,0)}] (2.9,-2.1) -- (3.125,-1.875) -- (3.35, -2.1) -- cycle;
       % The squares to subtract
       \draw[pattern=grid] (2.75,-1.5)--(3.125,-1.125)--(3.5,-1.5) --(3.125,-1.875) -- cycle;
       \draw[pattern=grid,shift={(1.25,0)}] (2.75,-1.5)--(3.125,-1.125)--(3.5,-1.5) --(3.125,-1.875) -- cycle;
       \draw[pattern=grid,shift={(2.5,0)}] (2.75,-1.5)--(3.125,-1.125)--(3.5,-1.5) --(3.125,-1.875) -- cycle;
       \draw[pattern=grid,shift={(3.75,0)}] (2.75,-1.5)--(3.125,-1.125)--(3.5,-1.5) --(3.125,-1.875) -- cycle;
    \end{tikzpicture}
    \caption{A scheme to compute the middle digits.  The squares are calculated by a size-appropriate method (classical, Karatsuba, FFT).  The hatched areas of overlap are double counted, so must be subtracted.  The solid areas are calculated sparately.}
    \label{fig:middle-third-method}
    \begin{tikzpicture}
        
    \end{tikzpicture}
\end{figure}
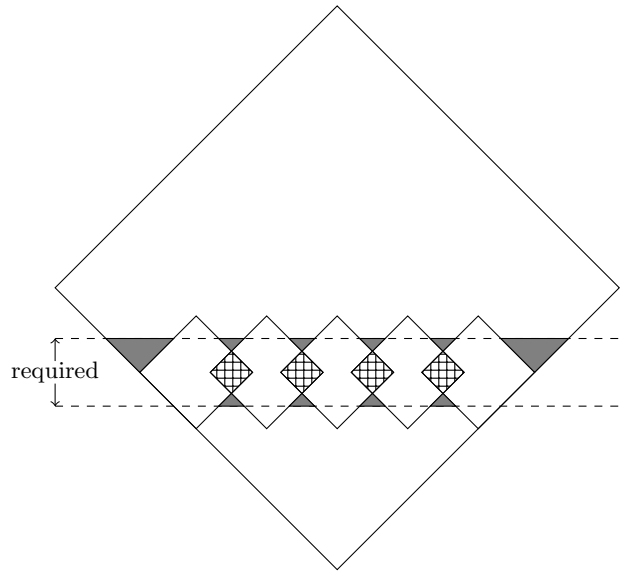
\clearpage
\section{Further Thoughts and Conclusions}
\label{sec:thoughts}
We have studied the problem of computing a specified portion of integer and polynomial products, giving some algorithms, cost analysis, bounds on carry propagation and examples.  

We have shown a number of methods, and which is least costly depends on the size of the values to be multiplied and on the interval of the product desired.   For example, if a very few digits of the middle of an integer product are desired, then clipped classical multiplication will give the result in time linear in the input with a good constant factor.  On the other hand, if a substantial fraction of the leading terms of a huge polynomial product are desired, then a straightforward bottom clipped FFT product of the reverse polynomials may be best.

In general, one needs a polyalgorithm that reduces to specific schemes in particular regimes.  We believe the boundaries between these regimes, and the remaining gaps, have not been particularly well explored, and the present work is a step toward filling them.

%%%% Bibliography  %%%%%%%%%%%%%%%%%%%%%%%%%%%%%%%%%%%%%%%%%%%%%%%%%%%%%%%%%%
\IfFileExists{IfExistsUseBBL.bbl}{%

}{%
\bibliography{main}

% arXiv will remove this file if empty, and thus break things.

\begin{thebibliography}{10}
\providecommand{\url}[1]{\texttt{#1}}
\providecommand{\urlprefix}{URL }
\providecommand{\doi}[1]{https://doi.org/#1}

\bibitem{cook-phd}
Cook, S.A.: On the Minimum Computation Time of Functions. Ph.D. thesis, Harvard University (1966)

\bibitem{cooley-tukey}
Cooley, J.W., Tukey, J.W.: An algorithm for the machine calculation of complex {F}ourier series. Mathematics of Computation  \textbf{19}(90),  297--301 (1965)

\bibitem{HanQuerZim}
Hanrot, G., Quercia, M., Zimmermann, P.: The middle product algorithm {I}: Speeding up the division and square root of power series. Applicable Algebra in Engineering, Communication and Computing  \textbf{14},  415--438 (2004)

\bibitem{HanZim}
Hanrot, G., Zimmermann, P.: A long note on {M}ulders' short product. J. Symbolic Computation  \textbf{37},  391--401 (2004)

\bibitem{vdHoevenTruncatedFFT}
van~der Hoeven, J.: The truncated {F}ourier transform and applications. In: Proc. ISSAC 2004. pp. 290--–296. ACM, New York (2004)

\bibitem{vdHoevenLecerfBlockwise}
van~der Hoeven, J., Lecerf, G.: On the complexity of multivariate blockwise polynomial multiplication. In: Proc. ISSAC 2012. pp. 211--–218. ACM, New York (2012)

\bibitem{karatsuba}
Karatsuba, A., Yu., O.: Multiplication of many-digital numbers by automatic computers. Proceedings of the USSR Academy of Sciences  \textbf{145},  293--294 (1962), translation in the academic journal Physics-Doklady, 7 (1963), pp. 595--596

\bibitem{knuth-vol4b}
Knuth, D.E.: The Art of Computer Programming, Volume 4b: Combinatorial Algorithms, Part 2. Addison-Wesley, Boston (2022)

\bibitem{mulders}
Mulders, T.: On short multiplication and division. In: Proceedings AAECC 11, 1. pp. 69--88 (2000)

\bibitem{2024-norman-watt}
Norman, A.C., Watt, S.M.: A symbolic computing perspective on software systems  \textbf{arxiv:2406.09085},  1--18 (2024)

\bibitem{1963-toom}
Toom, A.L.: {The complexity of a scheme of functional elements realizing the multiplication of integers}. Soviet Mathematics Doklady  \textbf{3},  714--716 (1963)

\bibitem{watt-issac-2023}
Watt, S.M.: Efficient generic quotients using exact arithmetic. In: Proc. International Symposium on Symbolic and Algebraic Computation ({ISSAC 2023}). pp. 535--544. ACM, New York (2023)

\bibitem{watt-casc-2023}
Watt, S.M.: Efficient quotients of non-commutative polynomials. In: $25^{th}$ International Workshop on Computer Algebra in Scientific Computing ({CASC} 2023). pp. 370--392. Springer Cham LNCS 14139, New York (2023)

\end{thebibliography}
}

\newpage

\end{document}